\newcommand{\change}[1]{\textcolor{black}{#1}}
\theoremstyle{plain}
\newtheorem{axiom}{Assumption}
\newtheorem{theorem}{Theorem}[section]
\newtheorem{lemma}[theorem]{Lemma}
\theoremstyle{definition}
\theoremstyle{remark}
\newtheorem{case}{Remark}
\begin{document}
\begin{frontmatter}
\title{Box Confidence Depth: simulation-based inference with hyper-rectangles}
\runtitle{Box Confidence Depth}

\begin{aug}
\author[A]{\fnms{Elena}~\snm{Bortolato}\ead[label=e1]{elena.bortolato@bse.eu}\orcid{0000-0001-6744-3090}},
\author[B]{\fnms{Laura}~\snm{Ventura}\ead[label=e2]{ventura@stat.unipd.it}\orcid{0000-0001-7322-998X}}

\address[A]{Department of Economics, Universitat Pompeu Fabra \& Data Science Center,  Barcelona School of Economics \printead[presep={,\ }]{e1}}

\address[B]{Department of Statistical Sciences,
University or Padova\printead[presep={,\ }]{e2}}
\runauthor{E. Bortolato et al.}
\end{aug}

\begin{abstract}
	This work presents a novel simulation-based approach for constructing confidence regions in  parametric models, which is  particularly suited for  generative models and situations where limited data and  conventional asymptotic approximations fail to provide accurate results.
	The method leverages the concept of data depth  and depends on creating random hyper-rectangles, i.e. boxes, in the sample space  generated through simulations from the model, varying the input parameters.
	A probabilistic acceptance rule allows to retrieve a Depth-Confidence Distribution for the model parameters from which point estimators as well as calibrated confidence sets can be read-off.
	The method is designed to address cases where both the parameters and test statistics are multivariate.
\end{abstract}

\begin{keyword}[class=MSC]
\kwd[Monte Carlo methods ]{65C05}
\kwd[; Tolerance and confidence regions ]{62F25}
\kwd[; Order statistics; Empirical distribution functions ]{62G30}
\end{keyword}

\begin{keyword}
\kwd{Depth functions}
\kwd{Simulation-Based inference}
\end{keyword}

\end{frontmatter}

\section{Introduction}
In many scientific domains, researchers face the challenge of evaluating complex statistical models in which the likelihood function is either computationally intractable or prohibitively expensive to calculate. This has led to the development and increasing popularity of likelihood-free inference methods, which offer powerful alternatives for parameter estimation and model comparison.
These methodologies leverage simulations, enabling  inference through the comparison of observed data with simulated outcomes generated from the model under various parameter settings.
In Bayesian inference, these include
Approximate Bayesian Computation  \citep{ rubin1984bayesianly,  pritchard1999ABC, sisson2018handbook},
Bayesian Synthetic Likelihood   \citep{wood2010statistical,price2018bayesian}, 
{Neural Likelihood and Posterior Estimation}  \citep{rezende2015variational,  papamakarios2019sequential}. 
In the frequentist setting,  after the foundational work of \cite{gourieroux1993indirect}, only recent years have seen advancements in likelihood-free inference \citep{ masserano2022simulation, xie2022repro, dalmasso2024likelihood}.


This study focuses on frequentist inference, 
targeting the construction of calibrated confidence intervals and regions across simulation-based models and non-standard regularity conditions. 
The proposed approach  provides a unified strategy for inference that seamlessly accommodates both univariate and multivariate parameters. This is achieved by means of a depth function   \citep{liu1990notion}, that allows defining nested confidence sets across all confidence levels,  offering researchers a comprehensive visualization of parametric uncertainty. 
A significant aspect of the proposed methodology is its ability to operate without requiring data to be necessarily reduced to a scalar summary statistic, \change{as it  is typically done in the frequentist framework}. Raw data can be utilized directly, enhancing the flexibility and automation of the inference process. \change{Similarly,  inference from diverse test statistics, linked to model-specific information, can be combined in a natural manner}. 
As a   byproduct of the procedure, the method also yields consistent   point estimators for model parameters.  

The rest of the paper is organized as follows. Section 2 reviews recent developments in simulation-based inference. Section 3 outlines the sampling methodology used to build the Confidence Depth, discusses its theoretical underpinnings, and addresses some  computational aspects, \change{with particular emphasis on challenges and remedies related to the curse of dimensionality problem}. Section 4 discusses various examples from either classical models such as Generalized Linear Models (GLMs), as well as models from the field of Likelihood Free Inference (LFI) and reports simulation studies. A discussion is provided in Section 5.

\section{Simulation based inference}
Consider a parametric model $p(y|\theta)$, with $\theta$  a finite-dimensional parameter. 
We denote with  {$y^\text{obs}$} the observed data, of size $n$,  with $ {t}:\mathbb{R}^n\rightarrow \mathbb{R}^d$ a collection of summary statistics of $d\leq n$ components, with
$ {t^\text{obs}}=t(y^\text{obs})$ the observed summary statistics. %

The key idea of Simulation Based Inference (SBI)  is that  inference can  rely on simulations from   the same process responsible for producing observed data.  Once pseudo-observations are generated from the model across various parameters values, the plausibility of the parameter used in the simulation can be assessed, based on comparison with the original data $y^\text{obs}$.

The most popular method for SBI in Bayesian inference is 
Approximate Bayesian Computation (ABC), introduced by  \cite{rubin1984bayesianly} and further developed by \cite{pritchard1999ABC}. ABC aims to generate  datasets  that mimic the observed sample using as proposals for $\theta$ draws from the prior distribution. Parameter values that generate synthetic observations closely matching the real observation, up to a certain tolerance $\varepsilon$, i.e. $d(y, y^\text{obs})<\varepsilon$, are retained. 
The   distance or divergence $d(\cdot, \cdot)$ between pseudo and actual data  is generally assessed  on a set of summary statistics that are   informative for the model.
Intuitively, if the synthetic data match the observed data, the model parameters used in the simulations are plausible for the model under consideration and in turns they are associated to  higher  likelihood function.  Several enhancements to the basic ABC algorithm have been proposed over time, see \cite{marjoram2003markov, marin2012approximate, del2015sequential, frazier2018asymptotic, bernton2019approximate,   rotiroti2024approximate} and references therein. 
The approximation in ABC is considered  non parametric, as the shape of the likelihood and the posterior is not specified but obtained by rejection Monte Carlo.
Parametric approximations of likelihood functions (and posteriors) in simulation-based settings have seen significant advancements in recent years, largely due to the growing influence of Machine Learning and Deep Learning techniques.  Two prominent approaches are  Bayesian Synthetic likelihood \citep{wood2010statistical,  price2018bayesian,    frazier2023bayesian}, which employs conditional density estimators based on a multivariate Gaussian model and  the family of 
Neural Posterior Estimation methods  \citep{rezende2015variational, papamakarios2019sequential} employing more flexible  conditional density estimators, as normalizing flows which better suited for high-dimensional data and complex models.
Machine Learning methods  have been heavily employed in Neural Ratio Estimation (NRE)  \citep{hermans2020likelihood, thomas2022likelihood} that estimates the ratio between the likelihood  $p(y|\theta)$ and data marginal $p(y)$, that is $r(y, \theta) = p(y|\theta)/p(y)$,  by training a  classifier  to distinguish  datasets generated from the 
conditional and the  marginal model. 

In the frequentist paradigm,
ratio estimation was  adopted by \cite{dalmasso2024likelihood}
to approximate the likelihood ratio statistic. In particular, once the  quantity $r(y, \theta)$ is approximated by means of the classifier trained  on the conditional model and  on models simulated using a reference distribution for the parameter of interest,  the empirical quantiles of level $\alpha$     are used to build confidence sets. Recently, \cite{kuchibhotla2024hulc} developed a methodology for constructing confidence intervals and sets with bounded coverage errors by utilizing   data subsampling. Nevertheless, the strategy is not purely likelihood-free, as it relies on Maximum Likelihood estimation, similarly to the Bootstrap approach \citep{efron1979computers, efron2003second}.

\section{Box-Confidence Depth}
Assume that it is possible to generate data from the parametric model $p(y|\theta)$, with $\theta\in \Theta \subseteq \mathbb{R}^p$. Let   $ {\theta_0}$ be the true value of $\theta$.   We assume that the model is correctly specified, so that    $ {p(y|\theta_0)}$ corresponds to the  true data generating process. Let $\pi(\theta)$ be a proposal distribution for the unknown parameter. Here, the proposal distribution is assumed to be uniform in a bounded subset of the parameter space $\Theta^b \subset \Theta$.    \textcolor{black}{The boundedness of $\Theta^b$, even when the parameter space is naturally unbounded, is technically necessary  to ensure computational feasibility, and guidelines to choose $\Theta^b$ without compromising the validity of the inferential procedure are discussed below.}

The proposed method consists in drawing $\theta^*$ from $\pi(\theta)$ and, for each $\theta^*$, generate two pseudo-samples from the model ${p(y|\theta^*)}$, denoted as $y^{*1}$ and $y^{*2}$. Summary statistics $t^{*1}$ and $t^{*2}$, each of dimension $d$, are then computed from $y^{*1}$ and $y^{*2}$, respectively. The proposal $\theta^*$ is accepted if the observed summary statistic $t^{\text{obs}}$, computed from the actual observed data $y^{\text{obs}}$, falls within a region defined by $t^{*1}$ and $t^{*2}$. This region can be conceptualized as a $d-$dimensional hyper-rectangle, called Box and denoted as $\mathcal{B}_t^*$,  in the space of summary statistics, with $t^{*1}=(t_{1}^{*1}, \ldots, t_{d}^{*1})$ and $t^{*2}=(t_{1}^{*2}, \ldots, t_{d}^{*2})$ defining its edges, i.e.
$$\mathcal{B}_t^*=\times_{j=1}^{d}[t_{j}^{*
	(1)},t_{j}^{*(2)}],$$ where $t_{j}^{*(1)}$ and $t_{j}^{*(2)}$ are the order statistics along the $j$-th coordinate ($j=1, \ldots, d)$. Equivalently, the parameter $\theta^*$ is accepted if $t_1^{*(1)} <t_1^{\text{obs}}< t_1^{*(2)} ,\:
t_2^{*(1)}< t_2^{\text{obs}} <t_2^{*(2)},\: 
\ldots, \:
t_d^{*(1)} < t_d^{\text{obs}} < t_d^{*(2)}.    
$  Figure \ref{fig:accrej} illustrates this concept in dimension $d=2$ and the algorithm is outlined in Algorithm \ref{algo0}.

\begin{algorithm}[h!]
	\caption{Accept-Reject Box-CD}
	\label{algo0}
	\KwIn{Proposal distribution $\pi(\theta)$, number of iterations $R$, summary statistic $t(\cdot)$, observed statistic $t^{\text{obs}} = t(y^{\text{obs}})$}
	\KwOut{Accepted samples $\theta^*$}
	
	\For{$j \gets 1$ \textbf{to} $R$}{
		Sample $\theta^*_j \sim \pi(\theta)$\;
		Sample $y_j^{*1}, y_j^{*2} \sim p(y|\theta_j^*)$\;
		Compute $t_j^{*1} = t(y_j^{*1})$, \quad $t_j^{*2} = t(y_j^{*2})$\;
		\If{$t^{\text{obs}} \in \mathcal{B}_t^* $}{
			Accept $\theta_j^*$\;
		}
	}
	\Return Accepted samples $\theta^*$
\end{algorithm}
\begin{figure}[h!]
	\includegraphics[width=1\linewidth]{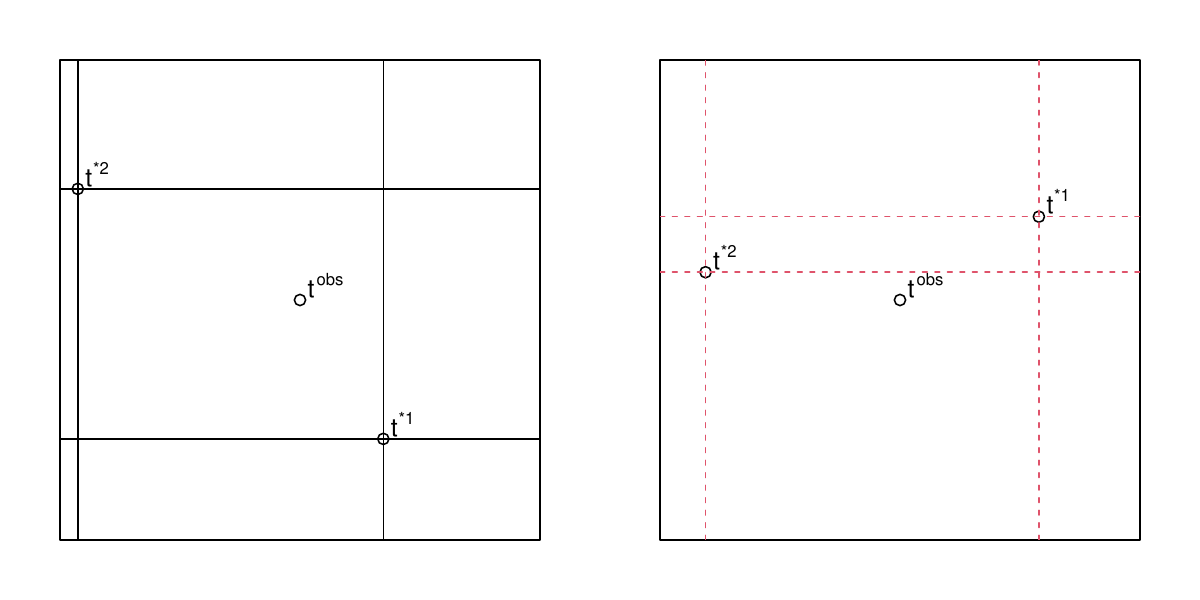}
	\caption{   
		Two examples of summary statistics, \( t^{*1} = (t^{*1}_1, t^{*1}_2) \) and \( t^{*2} = (t^{*2}_1, t^{*2}_2) \) computed on simulated pseudo-samples.  {Left}: the proposal parameter would be accepted as the observed  \( t^{\text{obs}} \) lies within the Box. {Right}: the proposal is rejected as \( t^{\text{obs}} \) falls outside this Box. }\label{fig:accrej}
\end{figure}

The procedure described seeks to learn a data-dependent distribution over the parameter space non-parametrically, utilizing a rejection algorithm, as in ABC, instead of assuming a closed-form model for summary statistics.
{However, it deeply differs from ABC as the inclusion criterion is not a distance function or a divergence. In particular, distances or divergences employed in ABC satisfy the identity of indiscernibles  property, i.e. $d(a,b)=0 \iff a=b$. Thinking about the proposed criterion illustrated in Algorithm \ref{algo0} as a discrepancy, it still may be zero even if the simulated data don't perfectly align with the observed sample. Conversely, at least ideally in ABC, as the tolerance or threshold parameter narrows, the pseudo-data must precisely match the observed data. The concept of matching simulations to observed data aligns with the idea of conditioning, which is a fundamental aspect of ABC. In contrast, the notion of ordering by using a series of inequalities in the sample space conforms to frequentist reasoning.}

Note that, since establishing a meaningful ordering in the sample space becomes challenging in presence of multivariate data and summary statistics, the procedure in practice utilizes a measure of centrality of observed data with respect to simulated data, which corresponds to an ordering   from the center outwards.

Proposals $\theta^*$  associated with a high centrality of the observed sample lead to frequent acceptance. This results in an empirical Monte Carlo-based measure of confidence and an ordering within the parameter space. 
The accepted $\theta^*$ are distributed as $$\mathcal{CD}^{\text{box}}(\theta): \Theta\mapsto \change{ \mathbb{R}^+},$$ called Box-Confidence Depth, which  assigns a measure of centrality or "depth"  to each parameter, with  higher values indicating that the parameter is more central or representative of the observed data.

 {As a final remark, observe that the procedure can be generalized to consider more than two replicas of the data. This extension is discussed in Section 3.5, after detailing the method's properties  and the main results.}

\subsection{Scalar-scalar case}
To formalize the properties of the function $\mathcal{CD}^{\text{box}}(\theta)$ in relation to confidence intervals and frequentist tests, it is useful to initially consider the scenario where $\theta \in\change{ \mathbb{R}}$ and $d=1$. In this case,  the Box reduces to an interval with  endpoints $t^{*1}=t(y^*_1)$ and $t^{*2}=t(y^*_2)$ and  the  proposed $\theta^*$ is accepted if and only if
$ 
t(y^\text{obs}) \in [t^{*(1)} , t^{*(2)}] 
$.   

\begin{axiom}
	The statistic $t: \mathcal{Y} \mapsto \mathcal{T} \in \mathbb{R}$ is one-dimensional
	and $0<Var(t(y)|\theta)<\infty$ for $\pi$-almost all $\theta$. \label{ASSstatistic}
\end{axiom}
The assumption that $Var(t(y)|\theta)>0$ for $\pi$-almost all $\theta$
ensures that the intervals of the form $[t^{(1)} , t^{(2)}]$ have positive probability of being
non-empty. 

\begin{axiom} \label{support}
	The support $\Theta^b$ is chosen such that $$\sup_{\theta \in \Theta^b}  F_t(t^{\text{obs}}|\theta)
	>1-b \text{ and }
	\inf_{\theta \in \Theta^b}  F_t(t^{\text{obs}}|\theta)< b,$$ where  $F_t(t^{\text{obs}}|\theta)$  is the cumulative \change{distribution} function  of $t(y)$,  computed \change{at} the value of the observed summary statistic $t^{\text{obs}}$ and  with \textcolor{black}{$b$} of the same order of the machine tolerance.
\end{axiom}
\textcolor{black}{The set $\Theta_b$ serves a purely technical purpose: it  guarantees that all parameter values with acceptance probability exceeding machine precision are included.}
\begin{lemma}\label{lemmatarget}
	For a scalar parameter $\theta$, under Assumption \ref{ASSstatistic},
	the Box-Confidence Depth   is 
	$$ \mathcal{CD}^{\text{box}}(\theta) \propto F_t(t^\text{obs}|\theta) [1- F_t(t^\text{obs}|\theta)].  $$
\end{lemma}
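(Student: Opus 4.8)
The plan is to read $\mathcal{CD}^{\text{box}}(\theta)$ as the density of the parameter values that survive the rejection step of Algorithm \ref{algo0}, and to exploit the fact that, for any rejection scheme, this density factorizes (up to a normalizing constant) as the proposal density times the conditional probability of acceptance. Since $\pi$ is uniform on $\Theta^b$, the proposal factor is constant there and drops out, so the entire problem reduces to computing, for a fixed proposal value $\theta$, the acceptance probability
$$ a(\theta) := \Pr\bigl(t^{\text{obs}} \in [t^{*(1)}, t^{*(2)}] \mid \theta\bigr), $$
and showing that $a(\theta) \propto F_t(t^{\text{obs}}|\theta)\,[1 - F_t(t^{\text{obs}}|\theta)]$.

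The key observation is that, conditionally on $\theta$, the two summary statistics $t^{*1}$ and $t^{*2}$ are independent and identically distributed, each with cumulative distribution function $F_t(\cdot|\theta)$. The acceptance event $t^{\text{obs}} \in [\min(t^{*1},t^{*2}),\, \max(t^{*1},t^{*2})]$ holds precisely when $t^{\text{obs}}$ is not below both replicas and not above both of them, i.e. when exactly one of $t^{*1},t^{*2}$ falls below $t^{\text{obs}}$ and the other above. First I would therefore pass to the complement, writing it as the union of the two "one-sided" events $\{t^{*1}\le t^{\text{obs}},\,t^{*2}\le t^{\text{obs}}\}$ and $\{t^{*1}> t^{\text{obs}},\,t^{*2}> t^{\text{obs}}\}$, which are disjoint.

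Using independence and abbreviating $F := F_t(t^{\text{obs}}|\theta)$, these two events carry probabilities $F^2$ and $(1-F)^2$, whence
$$ a(\theta) = 1 - F^2 - (1-F)^2 = 2F(1-F). $$
Absorbing the factor $2$ and the normalizing constant into the proportionality sign yields the claim. Assumption \ref{ASSstatistic} enters exactly here: the condition $0<\mathrm{Var}(t(y)|\theta)<\infty$ guarantees that $F_t(\cdot|\theta)$ is non-degenerate, so that $a(\theta)$ is not identically zero and the resulting depth is a well-defined, non-trivial function of $\theta$ (consistent with the remark following the assumption that the intervals $[t^{(1)},t^{(2)}]$ are non-empty with positive probability).

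I expect the only delicate point to be the treatment of ties, that is, the gap between the closed-interval definition of $\mathcal{B}_t^*$ and the strict-inequality characterization stated in the text. When $F_t(\cdot|\theta)$ is continuous at $t^{\text{obs}}$, the events $\{t^{*i}=t^{\text{obs}}\}$ have probability zero and the computation above is exact regardless of which convention one adopts; if the statistic has an atom at $t^{\text{obs}}$, the two conventions differ by a term involving the jump of $F_t$, so one must either assume continuity or record the identity up to that correction. Everything else is an elementary independence-plus-order-statistics computation, so this boundary bookkeeping is the main obstacle worth flagging.
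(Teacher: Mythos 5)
Your proof is correct and follows essentially the same route as the paper's: both reduce the claim to the acceptance probability of a fixed $\theta$ and use the i.i.d.\ structure of the two replicas to obtain $2F(1-F)$, the only (immaterial) difference being that you compute this via the complement $1-F^2-(1-F)^2$ while the paper sums the two disjoint ``one replica below, one above'' events directly. Your explicit treatment of the uniform-proposal/rejection-sampling step and of possible atoms at $t^{\text{obs}}$ is in fact slightly more careful than the paper's, which leaves both points implicit.
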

\begin{proof}
	Let $\theta\in \Theta$, and consider a pair of statistics following the pushed-forward distribution induced by the summary statistic $t$ applied to $y\sim p(y|\theta)$, i.e. $(t^{*1},\; t^{*2}) \overset{iid}{\sim} t_\# p(y|\theta)$. 
	One can compute the probability of acceptance of $\theta$  as follows:
	\begin{align*}
		&\Pr(t^{(1)} \leq t^{\text{obs}} < t^{(2)}|\theta)= 
		\Pr(t^{*1} \leq t^{\text{obs}} < t^{*2}|\theta)  
		+ \Pr(t^{*1} > t^{\text{obs}} \geq t^{*2}|\theta)\\
		&= \Pr(t^{*1} \leq t^{\text{obs}},   t^{\text{obs}} < t^{*2}|\theta)+
		\Pr(t^{*1} > t^{\text{obs}}, t^{\text{obs}} \geq t^{*2}|\theta)\\
		&= F_t(t^{\text{obs}}|\theta) [1-F_t(t^{\text{obs}}|\theta)] +F_t(t^{\text{obs}}|\theta) [1-F_t(t^{\text{obs}}|\theta)]\\
		&=2 F_t(t^{\text{obs}}|\theta) [1-F_t(t^{\text{obs}}|\theta)]\\
		&\propto  F_t(t^{\text{obs}}|\theta ) [1-F_t(t^{\text{obs}}|\theta)].
	\end{align*} Under Assumption \ref{support} we obtain the target distribution by a usual rejection sampling argument. 
\end{proof}

\begin{lemma}\label{lemma:pval}
	Under Assumption \ref{ASSstatistic}:\\
i)\:$F_t(t^{\text{obs}}|\theta)$ as a function of $\theta$ is   a one-sided $p$-value function,\\
ii)\:
		$F_t(t^{\text{obs}}|\theta)$ is a Confidence Distribution (CD) when  $F_t(t^{\text{obs}}|\theta)$ is stochastically increasing in $\theta$.
\end{lemma}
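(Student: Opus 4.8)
The plan is to read both claims through the probability integral transform (PIT) and the two standard requirements for a confidence distribution: (R1) for each fixed sample the map $\theta \mapsto H(\theta)$ is a distribution function on $\Theta$, and (R2) at the true value $\theta_0$ the quantity $H(\theta_0)$, viewed as a function of the random sample, is $U(0,1)$. Part (i) amounts to establishing (R2) alone, while part (ii) adds (R1) under the extra monotonicity hypothesis.

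For (i), I would fix the data-generating value $\theta_0$ and regard $t^{\text{obs}}=t(y^{\text{obs}})$ as random with $y^{\text{obs}}\sim p(y|\theta_0)$. Assumption \ref{ASSstatistic} guarantees $0<\mathrm{Var}(t(y)|\theta)<\infty$, so $t(y)$ is non-degenerate and $F_t(\cdot|\theta_0)$ is a genuine one-dimensional CDF; the PIT then yields $F_t(t^{\text{obs}}|\theta_0)\sim U(0,1)$. This is exactly the uniformity that characterizes a one-sided $p$-value function: rejecting the value $\theta$ when $F_t(t^{\text{obs}}|\theta)$ (or its complement) is small controls the type-I error at the nominal level. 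The one caveat I would flag is continuity of $F_t(\cdot|\theta_0)$: if $t(y)$ carries atoms, the transform produces a super-uniform variable, so the $p$-value becomes conservative rather than exact.

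For (ii), I would verify (R1) on top of (i). Monotonicity is where the stochastic-ordering hypothesis enters: if $t(y)\mid\theta$ is stochastically increasing in $\theta$, then for the fixed threshold $t^{\text{obs}}$ the map $\theta\mapsto F_t(t^{\text{obs}}|\theta)=\Pr(t(y)\le t^{\text{obs}}\mid\theta)$ is non-increasing, hence $1-F_t(t^{\text{obs}}|\theta)$ (or $F_t$ itself under the decreasing-orientation convention) is monotone and can serve as a distribution function in $\theta$. To finish (R1) I still need the correct limiting values $0$ and $1$ at the two ends of the range, and this is precisely what Assumption \ref{support} supplies: the choice of $\Theta^b$ forces $F_t(t^{\text{obs}}|\theta)$ to within machine precision of $0$ and of $1$ at the extremes, so the monotone curve sweeps the whole of $[0,1]$. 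Combining monotonicity and limits (R1) with the uniformity (R2) from part (i) delivers the confidence-distribution property.

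I expect the delicate point to be (R1) rather than (R2). Specifically, pinning down the orientation of the monotonicity (an increasing versus decreasing CD) and verifying that the curve actually attains the endpoints $0$ and $1$ — which is not automatic on an unbounded $\Theta$ and is exactly why Assumption \ref{support} is imposed. The stochastic-ordering hypothesis does the work for monotonicity, but one must use ``stochastically increasing'' consistently to obtain a bona fide CDF in $\theta$ after the appropriate reflection, and check that the discreteness caveat from part (i) does not spoil the exactness of the resulting confidence distribution.
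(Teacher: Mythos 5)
Your proof is correct and follows essentially the same route as the paper's: both verify the two defining properties of a confidence distribution (a CDF in $\theta$ for fixed data, obtained from the stochastic-ordering hypothesis, and uniformity at $\theta_0$, obtained from the probability integral transform), with part (i) reduced to the uniformity statement. You are in fact more careful than the paper on two points it glosses over — the orientation of the monotonicity (the paper's proof asserts $F_t(t^{\text{obs}}|\theta')<F_t(t^{\text{obs}}|\theta'')$ for $\theta''>\theta'$ when $t$ is stochastically increasing, whereas $\theta\mapsto F_t(t^{\text{obs}}|\theta)$ is then non-increasing, as you note) and the continuity of $F_t(\cdot|\theta_0)$ needed for exact rather than super-uniform $p$-values, plus the explicit use of Assumption \ref{support} to get the endpoint limits.
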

\begin{proof}
	Statement $i)$ follows immediately from the definition of $F_t$ in Lemma \ref{lemmatarget}. 
	For $ii)$, in general,
	a function $H(y, \theta)$ on $\mathcal{Y} \times \Theta \rightarrow  [0, 1]$ is  a  CD for  $\theta$ if {\citep[see e.g.][]{xie2013confidence}}: a)
	for each given $y \in \mathcal{Y}, H(\cdot)$ is a cumulative distribution function on $\Theta$; b)
	at  $\theta=\theta_0$, $H(y^{\text{obs}},\theta_0)$, as a function of the sample $y^{\text{obs}}$, follows a $\text{Uniform}[0, 1]$ distribution.     By construction $F_t(t^{\text{obs}}|\theta_0)\in[0,1]$, furthermore, if $t$ is stochastically increasing in $\theta$, then $F_t(t^{\text{obs}}|\theta')< F_t(t^{\text{obs}}|\theta'')$ for $\theta''>\theta'.$  By properties of the $p$-value function and for  $u\in [0,1]$ 
	$Pr(F_t(t^{\text{obs}}|\theta)<u)= Pr(
	t^{\text{obs}}<F_t^{-1}(u|\theta))$  is constant when $t^{\text{obs}}$  is drawn from $F_t(\cdot|\theta)$.
\end{proof}

\textcolor{black}{\begin{case}\label{medianu}
	Let $\hat \theta$ be the maximizer of   $\mathcal{CD}^{\text{box}}(\theta) $. Then,  under Assumption \ref{ASSstatistic},    $\hat \theta$ is median unbiased, i.e. \begin{equation}\label{eq1}
		Pr_{\theta_0}(\hat \theta \leq\theta_0)=1/2.
	\end{equation}
	Indeed, by definition  $\hat \theta=\underset{\theta}{\arg\max} \: F_t(t^{\text{obs}}|\theta)[1-F_t(t^{\text{obs}}|\theta)]$ and, since $F_t(t^{\text{obs}}|\theta)\in [0,1]$, the expression is maximum when the function $x(1-x)$ is maximum with $x$ in $[0,1]$, which is  $F_t(t^{\text{obs}}|\hat\theta)=0.5$. Then, applying $F_t$ to both sides of the inequality of Equation (\ref{eq1}) it follows that
	$Pr_{\theta_0}(F_t(t^{\text{obs}}|\hat\theta)<F_t(t^{\text{obs}}|\theta_0))=$ $Pr_{\theta_0}(0.5< F_t(t^{\text{obs}}|\theta_0))=1/2.$ \end{case}}
\begin{case}\label{remark_med}
Median unbiasedness is a desired property as it guarantees  consistency of the estimator \citep{schweder2016confidence}. Additionally, this property is preserved  also for any one-to-one reparametrizations \citep{kenne2017median, kuchibhotla2024hulc}.
\end{case}

\begin{case}\label{remark_diff_conf}
	Note that differently from a Confidence Distribution, the shape of   $\mathcal{CD}^{\text{box}}(\theta) $
	does not assume that $t$ is  stochastically ordered in $\theta$. In particular, if $F_t(t^{\text{obs}}|\theta)$ is not monotone in $\theta$, the function  $\mathcal{CD}^{\text{box}}(\theta) $ can be multimodal. Figure \ref{fig:f1mf} illustrates 
	this possibility.
\end{case}

\begin{case}
	Note that if the proposal $\pi(\theta)$ is centered on the confidence median, and the function $\mathcal{CD}^{\text{box}}(\theta) $ is symmetric, then the expected acceptance probability is 1/4.   This can be regarded as a practical guideline to tune the proposal.
\end{case}
\begin{figure}[h!]
	\centering \includegraphics[width=0.8\linewidth]{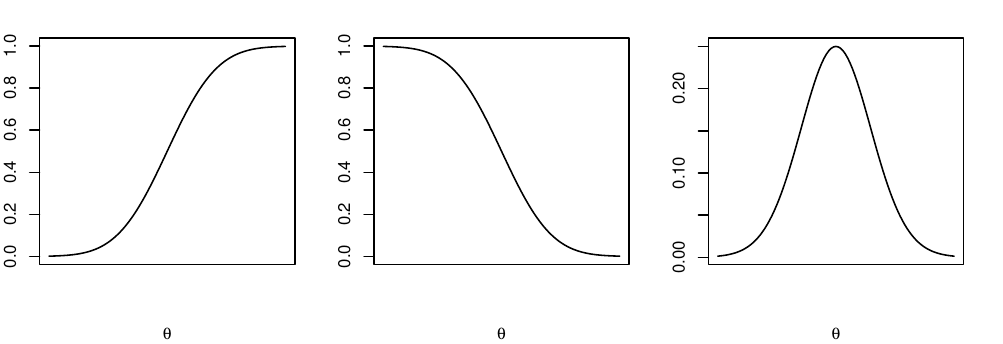}
	\includegraphics[width=0.8\linewidth]{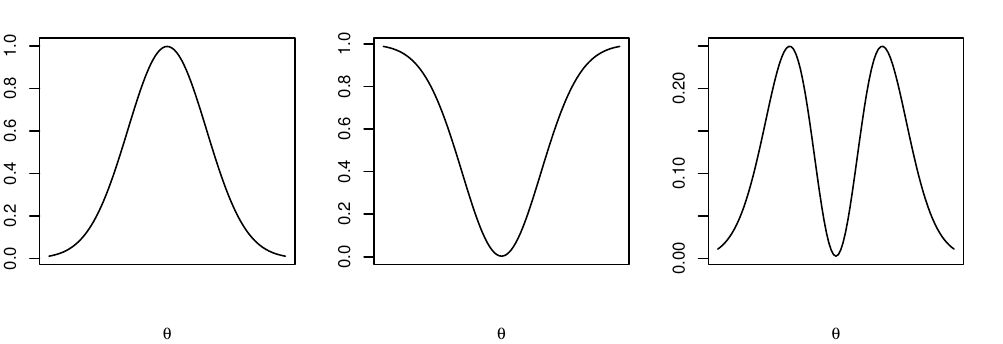}
	\caption{Top panels:
		an instance of a monotone   $p$-value function $F_t(t^{\text{obs}}|\theta)$
		(left), $1-F_t(t^{\text{obs}}|\theta)$ (center), and their product (right). Bottom panels: a non a monotone  $p$-value function where the resulting Confidence-Depth is multimodal.}
	\label{fig:f1mf}
\end{figure}

\subsection{Relation to confidence intervals} Let us write $F_t(\theta)$ shortly for  $F_t(t^{\text{obs}}|\theta )$.
Define  an equi-tailed confidence interval of size $1-\alpha$  as $C_{1-\alpha}=\{\theta| F_t(\theta)>\alpha/2  \text{\;and\:} F_t(\theta)<(1-\alpha/2)\}$.
Denote by $Q_Z(\cdot)$ the  function s.t.   for $p\in (0,1)$ and $Z,\zeta\in (0,1)$,
$Q_Z(p)=\zeta$ if $\int  1_{Z<\zeta  } dZ=p$. 

\begin{theorem}    \label{lemmaconfint}
	For a scalar parameter  $\theta$    and   $\alpha \in (0,1)$  the following relation holds: 
	$$ \mathcal{CD}^{\text{box}}(\theta)\geq Q_{\mathcal{CD}^{\text{box}}(\theta)}(\alpha) \Leftrightarrow  \theta \in C_{1-\alpha}.
	$$
\end{theorem}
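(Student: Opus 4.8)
The plan is to reduce everything to the one-dimensional shape $g(x)=x(1-x)$ on $[0,1]$, since by Lemma~\ref{lemmatarget} we have $\mathcal{CD}^{\text{box}}(\theta)\propto g(F_t(\theta))$. The unknown proportionality constant is harmless: it multiplies both $\mathcal{CD}^{\text{box}}(\theta)$ and its quantile $Q_{\mathcal{CD}^{\text{box}}(\theta)}(\alpha)$ by the same factor, so the inequality in the statement is invariant under rescaling and I may argue directly with $g(F_t(\theta))$. The whole proof then amounts to showing that the super-level set $\{\theta:\, g(F_t(\theta))\ge Q\}$ coincides with the preimage $F_t^{-1}\big((\alpha/2,\,1-\alpha/2)\big)=C_{1-\alpha}$, for the correct threshold $Q$.

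First I would analyse the level sets of $g$. The map $g$ is a downward parabola, strictly concave and symmetric about $x=1/2$, with maximum $1/4$; hence for any $c\in(0,1/4)$ the equation $g(x)=c$ has exactly two roots, placed symmetrically about $1/2$. Choosing $c=(\alpha/2)(1-\alpha/2)$ and solving $x^2-x+c=0$ gives $\sqrt{1-4c}=1-\alpha$, so the two roots are precisely $x=\alpha/2$ and $x=1-\alpha/2$. Concavity then yields the equivalence $g(F_t(\theta))\ge (\alpha/2)(1-\alpha/2)\Leftrightarrow F_t(\theta)\in[\alpha/2,\,1-\alpha/2]$, which, up to the measure-zero boundary, is exactly the defining condition of $C_{1-\alpha}$. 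This step uses only the shape of $g\circ F_t$ and not the monotonicity of $F_t$, consistently with Remark~\ref{remark_diff_conf}.

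Next I would identify this threshold with the quantile, i.e.\ show $Q_{\mathcal{CD}^{\text{box}}(\theta)}(\alpha)=(\alpha/2)(1-\alpha/2)$. The key is that the quantile is read with respect to the confidence distribution, under which $F_t(\theta)$ enjoys the probability-integral-transform property and is $\text{Uniform}[0,1]$ (Lemma~\ref{lemma:pval}). Writing $U=F_t(\theta)\sim\text{Uniform}[0,1]$ and $D=U(1-U)$, for $c\in(0,1/4)$ with roots $u_1<u_2$, $u_1+u_2=1$, one gets
\begin{equation*}
\Pr(D< c)=\Pr(U<u_1)+\Pr(U>u_2)=u_1+(1-u_2)=2u_1 .
\end{equation*}
Setting $2u_1=\alpha$ gives $u_1=\alpha/2$, hence $c=u_1(1-u_1)=(\alpha/2)(1-\alpha/2)$; by construction this $c$ is the $\alpha$-quantile of $D$, which is the claim. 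Combining with the previous paragraph closes the equivalence.

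The main obstacle I anticipate is conceptual rather than computational: pinning down precisely the measure against which $Q_{\mathcal{CD}^{\text{box}}(\theta)}(\alpha)$ is defined, so that the probability-integral-transform argument applies and the super-level set inherits confidence-distribution mass exactly $1-\alpha$. Once this is fixed, the only remaining care concerns the boundary $F_t(\theta)\in\{\alpha/2,\,1-\alpha/2\}$, where the weak inequality $\ge$ in the statement and the strict inequalities defining $C_{1-\alpha}$ disagree; since $F_t(\theta)$ is continuous this discrepancy has measure zero and does not affect coverage, but it should be acknowledged to make the stated equivalence fully rigorous.
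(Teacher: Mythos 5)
Your proof is correct and follows essentially the same route as the paper: both arguments characterize $C_{1-\alpha}$ as a super-level set of a symmetric unimodal function of $F_t(\theta)$ and identify the threshold with the $\alpha$-quantile of $\mathcal{CD}^{\text{box}}$ via the uniformity of $F_t$ under the confidence distribution. The only difference is presentational: you work directly with the parabola $x(1-x)$ and compute $\Pr\left(U(1-U)<c\right)=2u_1$ explicitly, whereas the paper routes through the piecewise-linear $g(F_t)=-|F_t-0.5|$ and an order-preserving transformation; your version is cleaner, makes the reference measure for $Q_{\mathcal{CD}^{\text{box}}(\theta)}(\alpha)$ explicit, and avoids the algebraic slip in the paper's formula for $h(F_t)$.
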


\begin{proof}
	Consider the piecewise linear function $g(F_t)=-|F_t -0.5|\in [-0.5,0]$. From its definition, the set $C_{1-\alpha}$ can be written as   $\{\theta | $
	$ g(F_t)\geq-0.5+\alpha/2\}$. Since the image of $g$ is  $[-0.5,0]$ and the function is linear in $F_t$, if   ${g(F_t)}\geq-0.5+\alpha/2$ then $ {g(F_t)}\geq Q_g(\alpha)$.
	Applying the monotone transformation $h(F_t)= 2 g(F_t) \text{sign}(F_t -0.5) +F_t= 2 (0.5-F_t ) +F_t=F_t(1-F_t)$
	which is order preserving, it directly follows
	that $Q_{\mathcal{CD}^{\text{box}}(\theta)}(\alpha)= {g(F_t)}$. This concludes the proof as $C_{1-\alpha}$ can be written as
	$\{\theta | $
	$ \mathcal{CD}^{\text{box}}(\theta)\geq Q_{\mathcal{CD}^{\text{box}}(\theta)}(\alpha) \}$.
\end{proof}

Theorem \ref{lemmaconfint}  outlines how to define confidence intervals via the Box-CD method.
Indeed, the accepted  values from  Algorithm \ref{algo0} are draws from $\theta^*\sim \text{Bernoulli}(\mathcal{CD}^{\text{box}}(\theta))$.
Thus, it is sufficient to obtain a continuous approximation of the function $\mathcal{CD}^{\text{box}}(\theta)$. Specifically, any Machine Learning classification algorithm that outputs classification probabilities can be trained using proposals drawn from $\pi(\theta)$ as inputs, while the acceptance rule's outcomes (0 or 1) as labels.
Alternatively, the same task can be obtained by   density estimation starting from  the values $\theta^*$ accepted from Algorithm \ref{algo0}. From the parametric approximation (or density estimation) of the function $\mathcal{CD}^{\text{box}}(\theta)$, the quantiles can be obtained.


\subsection{Mutivariate case: center-outward ordering}

In classical statistics, when   multiple   statistics  are collected ($d>1$), assessing the $p$-value of a precise null hypothesis involves computing the tail area  probability of an event in  dimension $d$. This process is complex and involves several considerations, particularly due to the dependence of the statistics used. In practice, the joint distribution is often only well-defined for Gaussian distributed test   statistics, limiting the applicability to other distributions.
It is generally preferred to reduce the information in a one dimensional   statistic, even  when the inference is on a parameter vector $(p>1)$, such as  the Likelihood Ratio test (LR),    since
in contrast to univariate data, multivariate data lacks a natural method for ordering.

On the other side, to address the problem of ordering in multidimensional settings,  researchers have   developed various techniques  leveraging \textit{Data Depth} concepts.
Data Depth  (DD) functions  provide a measure of centrality within multivariate sample spaces quantifying how  deep a point is relative to a multivariate probability distribution or data cloud. This centrality measure allows for a center-outward ordering of points in any dimension to ultimately delineate nested
central regions. For example, the Simplicial Depth (SD) method introduced by \cite{liu1990notion} determines the depth of a point by evaluating its presence within all combinations of simplex formed by the data points.
When examining the univariate counterpart of the SD, i.e. two independent observations drawn from a univariate cumulative distribution function, the SD is reduced to the form
$\text{SD}_1(x)=2F(x)[1 - F(x)]$, and the point that maximises $\text{SD}_1(x)$ corresponds to the median of the population.
Note that the definition resembles that of the  Box-CD function $\mathcal{CD}^{\text{box}}(\theta)$ in the scalar case.
Another well known DD function is the Tukey's Depth (or Half-space Depth. HD). In one dimension it is used   as the $p$-value for bilateral tests: $$ \text{HD}_1=2\min\{Pr_\theta(Y\leq y^{\text{obs}} ),Pr(Y\geq y^{\text{obs}})\}.$$ The HD function in the multivariate case requires the definition of a convex hull, which is the intersection of all halfspaces containing all sample points. The level sets of the HD are defined as the intersections of  halfspaces containing $k<n$ sample points. 

\cite{liu2022datadepth} consider the concept of DD to define Confidence Distributions for multivariate parameters, called depth CDs, by ranking parameter values instead of data points. They propose to use the distribution of non-parametric bootstrap  estimates to recover an approximate depth CD, motivated by the fact that  algorithms for reconstructing   half-space and simplicial depths either rely on approximations in dimensions larger than 3 or  computationally demanding  procedures \citep{laketa2023simplicial}.

The Box-CD approach, which is based on ordering   the sample space having a fixed reference $y^{\text{obs}}$, induces   an ordering on the parameter space, similarly to the  idea of the depth-CD of \cite{liu2022datadepth}. The following lemma establishes a connection with the depth concept.

\begin{lemma}\label{lemmaMULTI}
	For two parameter points $\theta^*$ and $\theta^{**}$ within $\Theta$ and with their corresponding random Boxes $\mathcal{B}_t^{*}$, $\mathcal{B}_t^{**}$,
	it holds  
	$$Pr(t^{\text{obs}} \in \mathcal{B}_t^{*})<Pr(t^{\text{obs}} \in \mathcal{B}_t^{**})\Leftrightarrow \mathcal{CD}^{\text{box}}(\theta^*)< \mathcal{CD}^{\text{box}}(\theta^{**}).$$
\end{lemma}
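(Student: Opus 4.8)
The plan is to observe that the multivariate Box-Confidence Depth is, up to a positive normalizing constant that does not depend on $\theta$, exactly the conditional acceptance probability $\Pr(t^{\text{obs}} \in \mathcal{B}_t \mid \theta)$, so that the claimed equivalence collapses to the trivial fact that multiplication by a positive constant preserves strict inequalities. In the statement, $\Pr(t^{\text{obs}} \in \mathcal{B}_t^{*})$ is understood as the acceptance probability at $\theta^{*}$, i.e. $\Pr(t^{\text{obs}} \in \mathcal{B}_t \mid \theta^{*})$, with the randomness carried by the two pseudo-samples drawn from $p(y\mid\theta^{*})$, and analogously for $\theta^{**}$.

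First I would recall the construction behind Algorithm \ref{algo0}: a proposal $\theta$ is drawn from $\pi$, two independent pseudo-samples are generated from $p(y\mid\theta)$, the Box $\mathcal{B}_t=\times_{j=1}^{d}[t_{j}^{*(1)},t_{j}^{*(2)}]$ is formed from their coordinate-wise order statistics, and $\theta$ is retained when $t^{\text{obs}} \in \mathcal{B}_t$. By the standard rejection-sampling identity — the same argument that closes the proof of Lemma \ref{lemmatarget}, and which is insensitive to the dimension $d$ — the law of the accepted proposals, namely $\mathcal{CD}^{\text{box}}$, has density proportional to the product of the proposal density and the conditional acceptance probability:
\[
\mathcal{CD}^{\text{box}}(\theta) \;\propto\; \pi(\theta)\,\Pr\!\big(t^{\text{obs}} \in \mathcal{B}_t \mid \theta\big).
\]
Invoking the assumption that $\pi$ is uniform on $\Theta^b$, the factor $\pi(\theta)$ equals a strictly positive constant there, so that
\[
\mathcal{CD}^{\text{box}}(\theta) \;=\; c^{-1}\,\Pr\!\big(t^{\text{obs}} \in \mathcal{B}_t \mid \theta\big), \qquad \theta \in \Theta^b,
\]
where $c = \int_{\Theta^b} \Pr(t^{\text{obs}} \in \mathcal{B}_t \mid \theta')\,\pi(\theta')\,\mathrm{d}\theta' > 0$. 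Positivity of $c$ is guaranteed by Assumption \ref{ASSstatistic} (non-degenerate boxes occur with positive probability) together with Assumption \ref{support} (so that $\Theta^b$ contains parameters with non-negligible acceptance probability). Since $c$ is a single positive number independent of the parameter, the map $x \mapsto x/c$ is strictly increasing; applying it to the two acceptance probabilities evaluated at $\theta^{*}$ and $\theta^{**}$ yields the equivalence in both directions.

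The only genuine point to treat carefully — and what I regard as the main obstacle, minor as it is — is confirming that the proportionality relation above is valid for general $d$ rather than merely in the scalar regime of Lemma \ref{lemmatarget}. Concretely, one must check that the acceptance event $\{t^{\text{obs}} \in \mathcal{B}_t\}$ remains a well-defined, measurable event under $t_\# p(y\mid\theta)$, so that the rejection-sampling density argument transfers verbatim. This is immediate: the hyper-rectangle $\mathcal{B}_t$ is a Borel set determined by the coordinate-wise minima and maxima of the two simulated summaries, and membership of the fixed point $t^{\text{obs}}$ in it is a Borel function of the simulated pair. With this in place the normalizing constant factors out exactly as in the one-dimensional argument, and — notably — no monotonicity or stochastic-ordering hypothesis on $t$ is needed, in contrast with the Confidence-Distribution characterization of Lemma \ref{lemma:pval}.
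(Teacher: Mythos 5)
Your argument is correct and coincides with what the paper intends: the paper in fact states Lemma \ref{lemmaMULTI} without any proof, treating it as immediate from the construction, since by the rejection-sampling argument of Lemma \ref{lemmatarget} (with the uniform proposal on $\Theta^b$) the Box-CD at $\theta$ is just a fixed positive constant times the acceptance probability $\Pr(t^{\text{obs}}\in\mathcal{B}_t\mid\theta)$, regardless of $d$ or $p$. You have simply made explicit the normalizing-constant and measurability details that the paper leaves implicit; nothing in your route differs in substance.
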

Note that Lemma \ref{lemmaMULTI} is not restricted to the case $p=1$. Indeed,  $\theta$ can be a vector without compromising the definition.
This means that the function $\mathcal{CD}^{\text{box}}(\theta)$ is higher when the random Box $\mathcal{B}_t^{*}$ contains the observed sample often, or equivalently,  that the proposal $(\theta^*, y^{*1}, y^{*2})$ is well centered with respect to the generating process that provided $y^{\text{obs}}$.
Beyond the application of the idea to simulation-based inference, as a Depth function $\mathcal{CD}^{\text{box}}(\theta)$  relies on   hyper-rectangles to determine a centrality measure. This approach reduces the complexity associated with relying on simplexes as in  the SD method.


\subsection{Relation to Confidence sets and properties of the Box-CD function } \label{sect:34}
Remark \ref{remark_diff_conf} plays a crucial role in generalizing the characteristics of the Box-CD function to the multivariate statistical context $(d>1)$. Since  the test statistic $t$ does not need to be stochastically ordered  either in one dimension,  whether the components of $t$ are positively or negatively dependent does not influence the definition of the depth function.

We can generalize Lemma \ref{lemmaconfint} to the general case of confidence sets assuming $\theta \in \mathbb{R}^p$.  Define $M=\max_\theta\mathcal{CD}^{\text{box}}(\theta)$.
\begin{theorem}
	The region $ C_{1-\alpha}=\{\theta | \mathcal{CD}^{\text{box}}(\theta) \geq \alpha M\}$ defines a confidence set with confidence level  $1-\alpha$. 
\end{theorem}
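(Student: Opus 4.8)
The plan is to read $C_{1-\alpha}$ as a level set of the depth and to translate its coverage into a statement about the sampling law of $\mathcal{CD}^{\text{box}}(\theta_0)$. Since $\theta_0\in C_{1-\alpha}$ exactly when $\mathcal{CD}^{\text{box}}(\theta_0)\geq \alpha M$, the coverage probability equals $\Pr_{\theta_0}\!\big(\mathcal{CD}^{\text{box}}(\theta_0)\geq \alpha M\big)$, so the whole argument reduces to identifying the distribution of the depth evaluated at the true parameter.

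First I would invoke Lemma \ref{lemmaMULTI} to represent $\mathcal{CD}^{\text{box}}(\theta_0)$ as the probability that a random box $\mathcal{B}_t^\ast$, generated from two independent pseudo-statistics at $\theta_0$, contains $t^{\text{obs}}$. Evaluated at the truth, the three vectors $t^{\text{obs}}$, $t^{\ast1}$, $t^{\ast2}$ are i.i.d.\ from $G=t_\#\,p(y|\theta_0)$, so $\mathcal{CD}^{\text{box}}(\theta_0)$ is exactly the center-outward box depth of a $G$-distributed point relative to $G$. The exchangeability of these three vectors is the structural fact I would lean on, since it renders the depth value a rank-type statistic whose null law can be computed.

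Next I would compute that law. In the scalar case it is transparent via the probability integral transform: writing $U=F_t(t^{\text{obs}}|\theta_0)\sim\mathrm{Unif}[0,1]$, Lemma \ref{lemmatarget} gives $\mathcal{CD}^{\text{box}}(\theta_0)\propto U(1-U)$, and $M$ is attained where $U=1/2$ (Assumption \ref{support} guarantees such a $\theta$ lies in $\Theta^b$). The event $\{\mathcal{CD}^{\text{box}}(\theta_0)\geq \alpha M\}$ then becomes the symmetric event $\{|U-1/2|\leq r_\alpha\}$ for an explicit radius $r_\alpha$, whose probability is computable directly. For $p>1$ I would argue that $C_{1-\alpha}$ coincides, as a set, with the nested central region induced by the same center-outward ordering (Remark \ref{remark_diff_conf} removes any stochastic-monotonicity requirement on $t$), so that coverage is again controlled by the one-dimensional law of the ranking variable, and the conclusion follows from the same exchangeability and probability-integral-transform argument applied along the depth's level sets, mirroring the scalar Theorem \ref{lemmaconfint}.

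The step I expect to be the main obstacle is the exact calibration of the threshold. The map $U\mapsto U(1-U)$ is quadratic, so the level-set probability of the box depth is not linear in the threshold; carrying the scalar computation through shows that $\{\mathcal{CD}^{\text{box}}(\theta_0)\geq \alpha M\}$ captures at least mass $1-\alpha$, giving a valid (in general conservative) confidence level, whereas exact equality would require the depth value at a random point to be uniform on $[0,M]$, as happens for the halfspace depth but not for the simplicial/box version. I would therefore either present the result as a guaranteed level-$(1-\alpha)$ region, or recalibrate $\alpha M$ to the true $\alpha$-quantile $Q_{\mathcal{CD}^{\text{box}}}(\alpha)$ of the null depth — equivalently, describe the same level set through the halfspace-depth ordering, for which the linear relation $Q(\alpha)=\alpha M$ is exact. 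Verifying that this recalibration preserves the set, so that the two depth descriptions cut out identical nested regions, is the point I would scrutinize most carefully, and it is what secures the coverage claim in arbitrary dimension.
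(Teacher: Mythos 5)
You follow the same basic route as the paper: reduce coverage to the sampling law of the scalar quantity $\mathcal{CD}^{\text{box}}(\theta_0)$ and cut its level set at a quantile of that law. The paper's entire proof is the assertion that $\Pr_{\theta_0}(\mathcal{CD}^{\text{box}}(\theta_0)> Q_{\mathcal{CD}^{\text{box}}}(\alpha))$ equals the nominal level, i.e.\ the depth at the truth is treated as a global test statistic thresholded at its null quantile. What you add --- and it is exactly the right place to push --- is the observation that the theorem's threshold is $\alpha M$ while the proof's threshold is $Q_{\mathcal{CD}^{\text{box}}}(\alpha)$, and that these agree only if $\mathcal{CD}^{\text{box}}(\theta_0)$ is uniform on $[0,M]$. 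Your scalar computation confirms they do not agree here: with $U=F_t(t^{\text{obs}}\mid\theta_0)\sim\mathrm{Unif}[0,1]$ one gets $\Pr(U(1-U)\ge \alpha/4)=\sqrt{1-\alpha}$, and the threshold that reproduces the exactly calibrated equi-tailed set of Theorem~\ref{lemmaconfint} is $(2\alpha-\alpha^2)M$, not $\alpha M$. The paper's proof silently performs the recalibration you propose, so your account is the more honest description of what is actually established.

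Two caveats on your own patch. First, the fallback claim that the $\alpha M$ region is conservative is verified only for $d=1$, where $\sqrt{1-\alpha}\ge 1-\alpha$. For $d\ge 2$ with independent coordinates the normalized depth is $\prod_{j=1}^{d}4U_j(1-U_j)=\prod_j W_j$ with $\Pr(W_j\ge\alpha)=\sqrt{1-\alpha}$, and since $\{\prod_j W_j\ge\alpha\}$ is strictly contained in $\bigcap_j\{W_j\ge\alpha\}$, the coverage is strictly below $(1-\alpha)^{d/2}\le 1-\alpha$: the region \emph{under}-covers, so conservativeness is not a safe retreat and only the quantile-thresholded version of the statement survives in general. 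Second, the quantile route itself requires the null law of $\mathcal{CD}^{\text{box}}(\theta_0)$; your exchangeability observation about $(t^{\text{obs}},t^{*1},t^{*2})$ makes each coordinate's contribution pivotal, but when the components of $t$ are dependent the joint null law of the depth depends on the copula of $t$ under $\theta_0$ and is not parameter-free in any obvious way. Neither your sketch nor the paper explains how $Q_{\mathcal{CD}^{\text{box}}}(\alpha)$ is identified without knowledge of $\theta_0$; that is the gap that remains in both arguments.
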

\begin{proof}
	The exact calibration property in the multivariate parameter case follows immediately by the definition.
	In fact, 
	$$Pr_{\theta_0}(\mathcal{CD}^{\text{box}}(\theta)> {Q_{\mathcal{CD}^{\text{box}}}}(\alpha) )=\alpha,$$ indicating that when considering $\mathcal{CD}^{\text{box}}(\theta)$ as a global test statistic, akin to the log-likelihood ratio, $C_{1-\alpha}$ has the nominal coverage property.
\end{proof}

\begin{case}
	Note that, even if the Algorithm \ref{algo0} for deriving the Confidence depth function relies on the definition of hyper-rectangles (in the  space of summary statistics), the confidence regions are curved regions (see Figure \ref{fig:rick}).
\end{case}

\begin{lemma}
	The Box-CD  function   is invariant under any   transformation which is order preserving (up to the sign) applied  individually to the components of $t$.  
\end{lemma}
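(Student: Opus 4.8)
The plan is to reduce the statement to the coordinate-wise structure of the acceptance event and to verify invariance one coordinate at a time. By Lemma~\ref{lemmaMULTI}, $\mathcal{CD}^{\text{box}}(\theta)$ is determined (up to the normalising constant) by the acceptance probability $Pr(t^{\text{obs}} \in \mathcal{B}_t^{*}\mid\theta)$, and by the definition of the Box this is the probability of the event
$$\bigcap_{j=1}^{d}\bigl\{\,t_j^{*(1)} < t_j^{\text{obs}} < t_j^{*(2)}\,\bigr\},$$
where $t_j^{*(1)}=\min(t_j^{*1},t_j^{*2})$ and $t_j^{*(2)}=\max(t_j^{*1},t_j^{*2})$. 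First I would fix a coordinate-wise map $\phi=(\phi_1,\dots,\phi_d)$ in which each $\phi_j$ is strictly monotone, and replace $t$ by $\tilde t=\phi(t)$, applied simultaneously to the observed statistic and to both simulated replicas. Since strict monotonicity carries strict inequalities to strict inequalities (and equalities to equalities), it suffices to show that, for every realisation of $(t^{*1},t^{*2})$, the transformed acceptance event agrees with the original one; and because the event factorises over coordinates, it is enough to treat a single coordinate.

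The core of the argument is the following dichotomy for a fixed $j$. If $\phi_j$ is strictly increasing, then it commutes with $\min$ and $\max$, so the transformed order statistics are $\phi_j(t_j^{*(1)})$ and $\phi_j(t_j^{*(2)})$ in the same order, and applying $\phi_j$ to the chain of inequalities preserves it; hence $\{t_j^{*(1)}<t_j^{\text{obs}}<t_j^{*(2)}\}$ and its transformed version describe the same event. If instead $\phi_j$ is strictly decreasing, then $\phi_j(\min(a,b))=\max(\phi_j(a),\phi_j(b))$, so the two endpoints exchange their roles: the new lower order statistic is $\phi_j(t_j^{*(2)})$ and the new upper one is $\phi_j(t_j^{*(1)})$. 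Applying the decreasing $\phi_j$ to the original inequalities reverses them into $\phi_j(t_j^{*(2)})<\phi_j(t_j^{\text{obs}})<\phi_j(t_j^{*(1)})$, which is precisely the membership of $\tilde t_j^{\text{obs}}$ in the transformed interval. In both cases the one-dimensional ``lies strictly between the two replicas'' event is invariant.

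I would then conclude by intersecting over $j=1,\dots,d$: since each coordinate event is unchanged and the full acceptance event is their intersection, the event $\{t^{\text{obs}}\in\mathcal{B}_t^{*}\}$ is invariant pointwise, whence its probability under $p(y\mid\theta)$ is invariant, and Lemma~\ref{lemmaMULTI} yields that the Box-CD computed from $\tilde t$ equals the one computed from $t$ at every $\theta$. The only delicate point is the decreasing case, where one must keep track of the fact that the order statistics are re-labelled when $\phi_j$ reverses orientation; the ``up to the sign'' clause in the statement is exactly what makes this harmless, because the centre-outward Box membership does not distinguish which replica supplies the left and which the right endpoint.
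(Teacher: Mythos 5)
Your proof is correct and follows essentially the same route as the paper's: both arguments rest on the observation that a strictly monotone component-wise map preserves the ``lies between the two replicas'' relation (with the endpoints swapping roles when the map is decreasing), so the coordinate-wise acceptance events, and hence the acceptance probability, are unchanged. Your version merely spells out the increasing/decreasing dichotomy and the intersection over coordinates more explicitly than the paper does.
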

\begin{proof}
	Define the transformation $w(t): t \in \mathbb{R^d} \mapsto w \in \mathbb{R^d}$ as a bijective, component-wise function, where $w_j(t): t_j \mapsto w_j$ represents the monotonic transformation applied specifically to the $j$-th component of the statistic $t$.
	Then $w_j'> w_j \Leftrightarrow \{t_j'> t_j \text{ or } t_j'< t_j\} $ for any $t_j \in \mathbb{R}. $
	In particualr, given $t_j$ such that  $t_j' < t_j< t_j''$, it follows
	that  $w_j' < w_j< w_j''$ or  $w_j' > w_j> w_j''$ and $Pr( t_j \in \mathcal{B}_t^*|\theta)=Pr( w_j \in \mathcal{B}_w^*|\theta)$.
\end{proof}



\subsection{Efficiency and optimality}
\change {In the Box-CD framework, coverage validity and type-I error control are guaranteed, while the width of the confidence sets reflects the amount of information preserved in the summary statistics. For instance, if a sample of size $n$
	is processed by computing statistics on only a fraction of the available observations, the effective information content decreases, which in turn leads to wider confidence intervals.}

\change{ Under classical regularity conditions—namely independent and identically distributed data from a regular parametric family, smoothness of the likelihood, existence of sufficient statistics (often complete in exponential families), finite Fisher information, and the Monotone Likelihood Ratio (MLR) property—confidence distributions for a scalar parameter   achieve optimality. In particular, they yield confidence intervals with the shortest possible expected length while maintaining the prescribed coverage probability, as their construction aligns with the theory of uniformly most powerful tests. Within this framework, Box-CD–based intervals correspond to equi-tailed intervals derived from the distribution of the pivotal quantity and the confidence distribution is simply the parameter-space representation of the pivot’s distribution, and Box-CD–based intervals emerge as the equi-tailed confidence intervals obtained from this construction.}
\change{
	When constructed from multiple summary statistics, the Box-CD can be directly compared to likelihood ratio methods in terms of efficiency.
	\begin{theorem} \label{thm:opt}
		\textbf{CD-based test versus Likelihood Ratio Test (LRT)}
		Let $y=(y_1,\dots,y_n)$ be i.i.d.\ from a family of distributions 
		$\{p(y \mid \theta): \theta \in \Theta\}$. Suppose that for each $i$, the family $\{p(y_i \mid \theta): \theta \in \Theta\}$ has the 
		{ MLR property} for a given statistic $t(y_i)$.
		Define the likelihood ratio for testing $H_0: \theta = \theta_0$ versus $H_1: \theta \neq \theta_0$ as
		\[
		\lambda(y) =   L(\theta_0 \mid y) /  L(\hat{\theta} \mid y),
		\]
		where $L(\theta \mid y) = \prod_{i=1}^n p(y_i \mid \theta)$ and $\hat{\theta}$ is the MLE under the full parameter space $\Theta$.  The test rejects for small values of $ \lambda(y) $ and  we denote the corresponding critical values at significance level $\alpha$ by $c_\alpha$.
		Then:
		\begin{enumerate}
			\item[(i)] For a single observation $y_i$, the test based on the Box-CD $\mathcal{CD}_i^\mathrm{box}(\theta)$, which rejects $H_0$ when 
			\[
			\mathcal{CD}_i^\mathrm{box}(\theta_0) < \alpha,
			\] 
			is equivalent to the likelihood ratio test (LRT); that is, it has the same rejection region as the LRT applied to $y_i$.
			\item[(ii)] For the full sample $y=(y_1,\dots,y_n)$, the critical region of the 
			LRT at level $\alpha$ is determined by  $c^*_\alpha$, which is the smallest value among the 
			coordinate-wise critical values $c_{i\alpha}$ for individual LRTs:
			\[
			c^*_\alpha \;=\; \min_{1\leq i\leq n} c_{i\alpha}.
			\]
			\item[(iii)] 
			For any $\theta$ and $\alpha$, there exists an integer $k$ such that  when	$\mathcal{CD}^\text{box}(\theta_0)=\alpha$
			$\mathcal{CD}^\text{box}_k(\theta_0)=\alpha_k$, with $\alpha_k \leq \alpha$.
		\end{enumerate}
		As a consequence, the rejection region of the CD-based test contains that of the LRT at the same nominal significance level $\alpha$.  
		Therefore, the power function of the CD-based test satisfies
	 \begin{equation}\label{eq:beta_ineq}
		\beta_\mathrm{CD}(\theta) \;\geq\; \beta_\mathrm{LRT}(\theta), 
		\quad \text{for all } \theta \in \Theta_1,
\end{equation}
		where $\Theta_1$ denotes the parameter values under the alternative hypothesis.  
		This inequality may be strict for some $\theta$, indicating that the CD-based test can achieve strictly higher power than the LRT in these cases.  
		In other words, under the stated MLR and \textcolor{black}{aggregating information from multidimensional statistics}, the CD-based test is uniformly at least as powerful as the LRT and may outperform it for certain alternatives.
	\end{theorem}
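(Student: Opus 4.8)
The plan is to reduce the power comparison to a set-containment statement in the sample space. Once I show that the acceptance (non-rejection) region of the Box-CD test is contained in the acceptance region of the LRT — equivalently, that the LRT rejection region $R_{\mathrm{LRT}}$ is contained in the Box-CD rejection region $R_{\mathrm{CD}}$ — the power inequality \eqref{eq:beta_ineq} is immediate: for every $\theta \in \Theta_1$, monotonicity of probability gives $\beta_{\mathrm{LRT}}(\theta) = Pr_\theta(R_{\mathrm{LRT}}) \le Pr_\theta(R_{\mathrm{CD}}) = \beta_{\mathrm{CD}}(\theta)$, and strictness follows on any $\theta$ for which the set difference $R_{\mathrm{CD}}\setminus R_{\mathrm{LRT}}$ carries positive mass. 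So the whole argument is about establishing this containment, and parts (i)--(iii) are the three ingredients that assemble it.

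First I would prove (i) for a single observation. By Lemma \ref{lemmatarget} the depth is $\mathcal{CD}_i^{\mathrm{box}}(\theta_0)\propto F_t(t(y_i)\mid\theta_0)[1-F_t(t(y_i)\mid\theta_0)]$, so after normalising by $M$ the event $\{\mathcal{CD}_i^{\mathrm{box}}(\theta_0)<\alpha\}$ is exactly $\{F_t(t(y_i)\mid\theta_0)<a_\alpha\}\cup\{F_t(t(y_i)\mid\theta_0)>1-a_\alpha\}$ for a threshold $a_\alpha$ determined by the concave map $x(1-x)$, i.e.\ a two-sided tail event in the statistic $t(y_i)$. The MLR property for $t(y_i)$ makes $\lambda$ a monotone function of $t(y_i)$ on each side of the mode, so the small-$\lambda$ region $\{\lambda\le c_\alpha\}$ is likewise a two-sided tail event in $t(y_i)$. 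Matching the two tail thresholds under the common level $\alpha$ gives the claimed identity of rejection regions; this is the cleanest of the three steps and relies only on monotonicity.

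Next I would handle the full sample, covering (ii)--(iii). The essential structural fact is that the Box is the Cartesian product of the coordinate intervals, so $t^{\mathrm{obs}}$ is accepted if and only if it lies inside every coordinate interval; the Box-CD rejection region is therefore the union $\bigcup_i R_i$ of the coordinate tail events, governed by the smallest coordinate-wise critical value $c^*_\alpha=\min_i c_{i\alpha}$ of (ii). Using (i) coordinate by coordinate, each $R_i$ equals the coordinate LRT region at its own level, and (iii) states that at the joint level the per-coordinate depth at some index $k$ satisfies $\alpha_k\le\alpha$, i.e.\ that the joint rejection threshold is met along at least one coordinate before it is met by the aggregated statistic $\lambda(y)$. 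I would translate this into the containment $R_{\mathrm{LRT}}\subseteq\bigcup_i R_i = R_{\mathrm{CD}}$ by arguing that whenever the product ratio $\lambda(y)=\prod_i p(y_i\mid\theta_0)/p(y_i\mid\hat\theta)$ is small enough to reject, at least one factor is extreme enough to place its coordinate statistic in the corresponding tail.

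The hard part will be reconciling this containment with exact level control, since naive containment of a level-$\alpha$ region inside another would force the outer region to have size $\ge\alpha$. The resolution I would use is that the Box-CD test is \emph{exactly} calibrated to level $\alpha$ by the confidence-set theorem above (the region $\{\mathcal{CD}^{\mathrm{box}}(\theta)\ge\alpha M\}$ has nominal coverage), whereas the finite-sample LRT built from the reduced statistic $\lambda(y)$ is only weakly conservative under the same aggregation; the containment then holds as a statement about the null-calibrated regions, and the extra mass $R_{\mathrm{CD}}\setminus R_{\mathrm{LRT}}$ is precisely the information the Box retains by scoring all $d$ coordinates jointly rather than collapsing them into $\lambda$. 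Making the direction of this conservativeness rigorous — and pinning down exactly where the joint LRT discretisation through $c^*_\alpha=\min_i c_{i\alpha}$ loses power relative to the coordinate-wise depths — is the step I expect to require the most care, and it is where the MLR hypothesis and the product structure of the Box must be combined most delicately.
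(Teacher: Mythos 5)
Your overall route is the same as the paper's: establish the single-observation equivalence with the LRT via the MLR property, show that the joint Box-CD rejection region contains every coordinate-wise rejection region, argue that the full-sample LRT region sits inside that union, and conclude by monotonicity of probability. One step in your version is stated too strongly, and the discrepancy matters for the strictness claim at the end. You write that the Box-CD rejection region \emph{is} the union $\bigcup_i R_i$ of the coordinate tail events. What the product structure of the Box actually gives --- and what the paper proves via the factorization $\Pr(t_i\in\mathcal{B}_i,\ t_j\in\mathcal{B}_j)=\Pr(t_i\in\mathcal{B}_i)\,\Pr(t_j\in\mathcal{B}_j\mid t_i\in\mathcal{B}_i)\le\Pr(t_i\in\mathcal{B}_i)$ --- is only the one-sided bound $\mathcal{CD}^{\mathrm{box}}(\theta_0)\le\mathcal{CD}_i^{\mathrm{box}}(\theta_0)$ for each $i$, hence $\bigcup_i R_i\subseteq R_{\mathrm{CD}}$, and this containment is typically strict: the joint depth can fall below $\alpha$ because several coordinates are moderately off-centre even though no single coordinate lies in its own tail. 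If your equality held, the Box-CD test would coincide with the ``reject if any marginal rejects'' test and the possibility of strictly higher power --- which the paper attributes precisely to this extra mass under conditional dependence of the coordinates --- would disappear. Replacing the equality by the containment repairs this; your chain $R_{\mathrm{LRT}}\subseteq\bigcup_i R_i\subseteq R_{\mathrm{CD}}$ then still delivers the conclusion. The step you rightly flag as delicate --- that a small product $\lambda(y)=\prod_i p(y_i\mid\theta_0)/p(y_i\mid\hat\theta)$ forces at least one coordinate into its tail, i.e.\ $R_{\mathrm{LRT}}\subseteq\bigcup_i R_i$ --- is asserted rather than proved in the paper as well (it is the content of items (ii) and (iii)), and a product of many moderately small factors can be small with no single factor extreme, so neither your sketch nor the paper's closes that gap. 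Your observation about the tension with exact level calibration is sound and is essentially what the remark following the theorem concedes: the result is a containment of rejection regions, not a claim of classical optimality.
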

	\begin{proof}
		By the MLR property, each marginal Box-CD test $\mathcal{CD}_i^\text{box}(\theta)$ based on $y_i$ is equivalent to the LRT for that observation. In particular, for a single observation,
		\[
		\mathcal{CD}_i^\text{box}(\theta_0) < \alpha_i \quad \Leftrightarrow \quad \lambda(y_i) < c_{i\alpha},
		\]
		so their rejection regions coincide.
		For the full sample $y=(y_1, \dots, y_n)$, let us express the aggregated Box-CD in terms of conditional probabilities. For any two marginals \(i\) and \(j\):
		\[
		\mathcal{CD}^\text{box}(\theta_0) = \frac{\Pr(t_i \in 	\mathcal{B}_i \text{ and } t_j \in 	\mathcal{B}_j)}{M} 
		= \frac{\Pr(t_i \in	\mathcal{B}_i) \, \Pr(t_j \in \mathcal{B}_j \mid t_i \in \mathcal{B}_i)}{M},
		\]
		where \(\mathcal{B}_i \text{ and } \mathcal{B}_j\) are   marginal boxes and \(M\) is the maximum of the unnormalized Box-CD as defined in Section \ref{sect:34}.  
		Since \(\mathcal{CD}_i^\text{box}(\theta_0) = \Pr(t_i \in \mathcal{B}_i)/M_i < \alpha_i\) for some \(i\), and because conditional probabilities satisfy \(0 \leq \Pr(t_j \in \mathcal{B}_j \mid t_i \in \mathcal{B}_i) \leq 1\), and $M_i>M$, the aggregated Box-CD satisfies
		\[
		\mathcal{CD}^\text{box}(\theta_0) \leq \mathcal{CD}_i^\text{box}(\theta_0) < \alpha_i.
		\]  
		Thus, the aggregated rejection region contains all marginal rejection regions, including the LRT rejection region.
		Regarding power, we observe that:
		\begin{itemize}
			\item if marginal statistics are weakly correlated, i.e. \(\Pr(t_j \in \mathcal{B}_j \mid t_i \in \mathcal{B}_i) \approx \Pr(t_j \in \mathcal{B}_j)\), the aggregated Box-CD behaves similarly to the LRT, and power is approximately equal;
			\item if there is moderate conditional dependence, i.e. \(\Pr(t_j \in \mathcal{B}_j \mid t_i \in \mathcal{B}_i) > \Pr(t_j \in \mathcal{B}_j)\), aggregation increases evidence against \(H_0\), so the Box-CD test can achieve strictly higher power than the LRT.
		\end{itemize}
		Hence, for all \(\theta \ne \theta_0\),  Equation \ref{eq:beta_ineq}
	holds,
		with strict inequality under moderate conditional dependence.
	\end{proof}}
\textcolor{black}{\begin{case}
	The conclusion of Theorem 3.7 should not be interpreted as a claim of classical optimality. In particular, the CD-based test is not asserted to be uniformly or locally most powerful in the Neyman-Pearson sense, where such optimality relies on monotone LR  and sufficiency assumptions.
	Rather, the result establishes a containment relationship between rejection regions, implying power equivalence or dominance with respect to the LRT   more in general.
	Importantly, the classical optimality of the LRT does not preclude the existence of alternative procedures that can match its local power, nor does it rule out improved performance in nonregular settings.
\end{case}}
 
\subsection{High dimensional hyper rectangles}

Denote as $\mathcal{B}_t^{*(d)}$ a random Box based on dimension $j \in \{1, \ldots,  d\}$. Then
\change{
	$$Pr(t^{\text{obs}}_1, \ldots, t^{\text{obs}}_{d-1}  \in \mathcal{B}_t^{*(d-1)})\geq Pr(t^{\text{obs}}_1, \ldots, t^{\text{obs}}_{d} \in \mathcal{B}_t^{*(d)}).$$
	This inequality reflects that the acceptance probability decreases as the dimensionality of the statistic increases. Intuitively, for the observed point $t^{\text{obs}}$ to lie within the $d$-dimensional box $\mathcal{B}_t^{*(d)}$, all $d$ components must fall within their respective coordinate-wise intervals. This implies that any subset of $d-1$ components must also fall within their corresponding intervals. In contrast, for the $(d-1)$-dimensional case, only a subset of these constraints needs to be satisfied, making it more likely for the point to be accepted.}

This leads to challenges in accurately estimating the tails of the Box-CD function, as the corresponding parameter regions  are associated with rare events, especially in high dimensions, as it happens for ABC.

\change{However,
as stated in Section 3,  a crucial property of distance and divergence functions typically used in   ABC   is the identity of indiscernible. This condition is not fulfilled by the  type of discrepancy associated to the acceptance criterion of Box-CD. This would only occur in a degenerate case where the set $\mathcal{B}^*_t$ is a singleton with probability one, a scenario not endorsed by our assumptions. The failure to meet this property carries significant implications.  In particular, it allows the method to accept parameter values even when the simulated summary statistics are far from the observed ones, as long as they fall within a coarse acceptance region.  This does not rule out the problem of the curse of dimensionality. But we cannot study this issue straight under the lens of the distance-based approaches.
\subsubsection{ABC for increasing dimension}
To illustrate this, consider this simplified   scenario. Let \( t^{\text{obs}} \in \mathbb{R}^d \) be a fixed observed summary statistic, from the model  $\mathcal{N}_d (0, I_d)$ and let us assume that \( t \sim \mathcal{N}_d (0, V) \) is a simulated statistic from the prior predictive distribution, where  $V=v\cdot I_d$ is diagonal. In ABC, the common acceptance criterion is based on the fixed-radius ball
\[
\| t - t^{\text{obs}} \| \le \epsilon,
\]
for some fixed tolerance \( \epsilon > 0 \). Then the acceptance probability satisfies the identity property.
Now consider without loss of generality $t^{\text{obs}}=0_d$  and $t$  independent random vectors from a standard multivariate normal distribution in \( \mathbb{R}^d \). We are interested in the probability that their Euclidean distance is less than or equal to \( \epsilon \), i.e.
\[
 {Pr}(\|t- t^{\text{obs}}\| \le \epsilon).
\]
Let us denote with $\Delta = t_1 - t_2$ the difference between $t_1$ and $t_2$. Since \( t_1 \) and \( t_2 \) are independent and both distributed as \( \mathcal{N}_d (0, I_d) \), the difference \( \Delta \) is distributed as $\Delta \sim \mathcal{N}_d (0, V+I_d)$.  Therefore, the squared norm follows the  scaled chi-squared distribution
\[
\|\Delta\|^2 \sim (v+1) \cdot \chi^2_d.
\]
Hence, the probability of interest becomes
\[
{Pr}(\|t  - t^{\text{obs}}\| \le \epsilon) = {Pr}((v+1) \cdot \chi^2_d \le \epsilon^2) = F_{\chi^2_d}\left( \frac{\epsilon^2}{v+1} \right),
\]
where \( F_{\chi^2_d} (\cdot) \) is the cumulative distribution function (CDF) of the chi-squared distribution with \( d \) degrees of freedom. The expectation and variance of \( \|\Delta\|^2 \) are, respectively,
\[
\mathbb{E}(\|\Delta\|^2) = (v+1)d \quad \text{and} \quad \text{Var}(\|\Delta\|^2) =  (v+1)^2d.
\]
So, as \( d \to \infty \), the distance   grows roughly as \( \sqrt{(v+1)d} \). For fixed \( \epsilon \), the probability decays rapidly with \( d \). 
Moreover, the density of \( \Delta \sim \mathcal{N}_d (0, (v+1)I_d) \) is approximately constant close to zero (as we are interested in having  $\epsilon\approx 0$). Denoting with $Vol(\epsilon)$ the volume of the ball with radius $\epsilon$, then
\[
{Pr}(\|t - t^{\textbf{obs}}\| \le \epsilon) \approx 
p(t^{\text{obs}}) \cdot Vol(\epsilon) = p(t^{\text{obs}})
\cdot \frac{\pi^{d/2}}{\Gamma(d/2 + 1)} \cdot \epsilon^d,
\]
with $\Gamma\left(\frac{d}{2} + 1\right) \approx \sqrt{\pi d} \left(\frac{d}{2e}\right)^{\frac{d}{2}}$ by   Stirling's approximation.
Hence for large $d$
\[
{Pr}\big(\|t_1 - t_2\| \leq \epsilon\big) \propto \frac{1}{\sqrt{d}} \left( \epsilon \frac{\sqrt{2 \pi e}}{\sqrt{d}} \right)^d  < \epsilon^d,
\]
which decays  super-exponentially for the presence of   \( (1/\sqrt{d})^d \).
\subsubsection{Box-CD for increasing dimension}
Alternatively, consider the acceptance region defined by a data-adaptive axis-aligned hyperrectangle:
\[
\mathcal{B}(t^{(1)}, t^{(2)}) = \{ t \in \mathbb{R}^d : \min(t^{(1)}_j, t^{(2)}_j) \le t_j \le \max(t^{(1)}_j, t^{(2)}_j) \ \text{for all } j = 1, \ldots, d \},
\]
where \( t^{(1)}, t^{(2)} \sim \mathcal{N}_d (0, V) \) are two independent simulated summaries from the prior predictive. Then, the acceptance probability is
\[
{Pr}(t^{\text{obs}} \in \mathcal{B}(t^{(1)}, t^{(2)})) \propto \frac{1}{\prod_{j=1}^d(1-x_j)x_j} \leq \left(\frac{1}{4}\right)^{d},
\] with $0<x<1$ and $x=1/2$ if the prior/proposal is symmetric and centered around the   the point of maximal depth. This quantity decays   exponentially with the dimension, still providing a better rate than ABC.
 }

\subsubsection{Box-CD Extension with $S$ pseudo-samples}
\change{
To alleviate the problem of low acceptance probability that translates into inefficiency in high dimensions, we introduce a generalization of the Box-CD approach based on generating a series of $S$ pseudo-samples instead of a pair, without compromising the validity of the procedure}. We only require that $S$ to be a even number.
Define $$\mathcal{B}_{t,S}^*=\times_{j=1}^{d}[t_{j}^{*
	(1)},t_{j}^{*(S)}],$$ where $t_{j}^{*(1)}$ and $t_{j}^{*(S)}$ are the order statistics along the $j$-th coordinate. Equivalently, the parameter $\theta^*$ is accepted if $t_1^{*(1)} <t_1^{\text{obs}}< t_1^{*(S)} ,\:
t_2^{*(1)}< t_2^{\text{obs}} <t_2^{*(S)},\:   
\ldots 
t_d^{*(1)} < t_d^{\text{obs}} < t_d^{*(S)}.    
$  
The idea is still that of providing a centrality measure but changing the boundaries of the boxes as the minimum and the maximum  test statistics.
The induced ordering relies on the fact that, similarly to Lemma   \ref{lemmaMULTI},
$$Pr(y^{\text{obs}} \in \mathcal{B}_{t,S}^{*})<Pr(y^{\text{obs}} \in \mathcal{B}_{t,S}^{**})\Leftrightarrow \mathcal{CD}^{\text{box}}_S(\theta^*)< \mathcal{CD}^{\text{box}}_S(\theta^{**}).$$ 
\textcolor{black}{The computational cost of a single model simulation, and hence of one acceptance–rejection attempt, is $O(Sd)$ instead of $O(2d)$.
 However, the number of accepted samples may increase at a rate exceeding the ratio $S/2$, which can lead to more accurate estimation of the target function—especially in the tails—under a fixed computational budget. Moreover, increasing $S$ naturally lends itself to parallel computation: with $S$ parallel processors, the acceptance probability can grow by more than a factor of $S$ in practice. Finally, note that in general   the cost of simulating from the model is typically  higher than that of computing the statistics, thus $S$ dominates the cost for each proposal from the model.}

In Example 4.3 we empirically examine the effect of choosing   $S>2$, with particular attention to scenarios where the dimension of  summary statistics increases.


\section{Examples} We present and discuss a series of examples across both classical problems as GLMs, and more challenging cases from the domain of LFI.
For each example considered, we perform  a simulation study with 2000 replicated datasets for each model to assess the validity of the coverage of confidence sets provided by the proposed method and to compare the results with those provided by the Likelihood Ratio test, when available. The results of these simulation experiments are reported in   Table     \ref{tab:sim};  Monte Carlo standard errors for the empirical  coverage are  between 0.008 and  0.01. 
The code for reproducing all the simulations is available at \url{https://github.com/elenabortolato/box}.
In all the given scenarios, after executing Algorithm \ref{algo0}, we perform density estimation using independent Gaussian kernels, as implemented  in the  R library \texttt{pdfCluster} \citep{azzalini2014clustering}. Specifically, the density estimate for a point $\mathbf{x} \in \mathbb{R}^d$ is given by
\[
\hat{f}(\mathbf{x}) = \frac{1}{n h^d} \sum_{i=1}^n K\left( \frac{\mathbf{x} - \mathbf{x}_i}{h} \right),
\]
where $h > 0$ is the bandwidth parameter, internally chosen via cross-validation, and
$K(\mathbf{u}) = \prod_{j=1}^d k(u_j)$, with $k(u_j)$ representing the univariate Gaussian kernel function independently applied across each dimension $j$. We employed a 1-Nearest neighbor method to assess whether   $\theta_0$ was included in the confidence regions, by predicting the value of $\mathcal{CD}^\text{box}(\theta_0)$.
\textcolor{black}{The reported computing times in each example correspond to computations performed on a 3.49 GHz processor and parallelizing on 7 cores.}

\subsection{Logistic regression}
Consider a logistic regression model for a sample   of size $n=20$ with $p=3$ predictors and corresponding coefficients equal to $\beta_0=( -0.25,0,0.25)$.  The summary statistics employed comprise the model's sufficient statistics $t=X^\top y$,  of dimension $d=p$.
As a proposal, we use $\pi(\beta)=\text{Uniform}[-6,6]^p$. 
The empirical  coverage level of confidence sets  are closer to their nominal value than those obtained via the Likelihood Ratio test (Table \ref{tab:sim}).  \textcolor{black}{  The computing time for $R=10000$  replications  is 1.80 seconds.}


\begin{table}[h!]
	\centering
	\begin{tabular}{p{0.068cm}p{0.068cm}p{0.068cm}p{0.08cm}p{1.cm}|cccc|cccc}
		& & & & &\multicolumn{4}{|c|}{\textbf{CD-Box $(1-\alpha)$}} & \multicolumn{4}{|c}{\textbf{ LR $(1-\alpha)$}} \\  
		$p$ &$d$ &$S$& $n$&  Model& \textbf{0.95} & \textbf{0.90} & \textbf{0.85}&\textbf{0.8}&   \textbf{0.95} & \textbf{0.90} & \textbf{0.85}&\textbf{0.8}\\
		\hline
			3&3&2& 20 &Logistic    & 0.948 & 0.899 &0.850 &0.787 & 0.929& 0.868 &0.811 &0.758 \\
				3&3& 2& 10 & M$t$       & 0.956 &0.900& 0.851 &0.780   &0.942 &0.884& 0.828 &0.771\\
		1&10&6& 10&  Mixture   & 0.959 & 0.897  & 0.823 & 0.778& 0.949 & 0.893 &0.840 &0.789    \\
		1&3&4&50&Ricker's &0.936 & 0.890&0.848& 0.758 &&&&\\
		2&19&10&20&Ricker's &0.938 & 0.886 & 0.842 & 0.794 &&&&
	\end{tabular}
	\caption{Results from the simulation studies based on 2000 replicated datasets for each model.  {Left}: coverages from the proposed method (CD-Box).  {Right}: coverages with the Likelihood Ratio (LR) test (when available).}
	\label{tab:sim}
\end{table}

\subsection{Multivariate $t$ distribution} Consider a three-variate Student's $t$ model with $10$ degrees of freedom for $n=10$ observations,  unknown vector of non-centrality parameter $\mu$ and known covariance matrix  given by
$$\Sigma=\begin{pmatrix}
	2&-1&0.4\\
	-1&1.6&0.7\\
	0.4& 0.7&1
\end{pmatrix}.$$ The true data generating parameter was set to $\mu_0=(0,-0.5,0.5)$.
As a  proposal we use $\text{Uniform}[-5,5]^3$ and as   summary statistics the empirical medians of the components. The number of pseudo-samples generated for each parameter proposal was $S=2 $. The results in Table \ref{tab:sim} show that the method guarantees nominal coverage for confidence regions. \textcolor{black}{The computing time in this example for $R=10000$ replications is of 2.17 seconds}.

\change{In this example the summary statistics, in addition to being dependent, have a distribution that depends on  all the unknown parameters.
	 By contrast, if the scale parameters were unknown but the summaries included only the means, the acceptance probability would not reach the upper bound of $1$ for increasing scales, violating the assumptions.  If the correlation parameters were unknown instead,  by using only the empirical means as summaries, the acceptance probability would be constant marginally in the correlations -  again violating the assumptions and leading to  non-informative regions whose depth is constant.}

\subsection{Mixture}
Consider the normal mixture model $y\sim 0.5 \mathcal{N}(-\theta,1)+0.5 \mathcal{N}(\theta,1).$ The summary statistic used is the ordered sample $t=(y^{\text{obs}}_{(1)}, \ldots ,y^{\text{obs}}_{(n)})$, of size $d=n$. The proposal for $\theta$ is a $\text{Uniform} 
[0,3]$ and we fix  $n=10,15,20,25$.  \textcolor{black}{ The computing time for $R=10000$ replications with $n=10$ is 2.5 seconds.}

We \change{focus on} this model to  study the acceptance ratio as a function of the dimension of the summary statistics ($d$) and the number of pseudo-samples, governed by the  hyper-parameter $S$. Figure \ref{fig:accmix} reports the \change{total}  number of accepted proposals from draws of size $R=100000$
(left) and ratio compared to accepted proposals with $S=2$. For a fixed $S$,  the number of accepted draws   decreases as $d$ increases, causing loss in efficiency. When $S$ varies, the number of accepted parameters is not proportional to $R$, but grows faster (see the right panel of Figure \ref{fig:accmix}).

\begin{figure}[h!]
	\centering
	\includegraphics[width=0.98\linewidth]{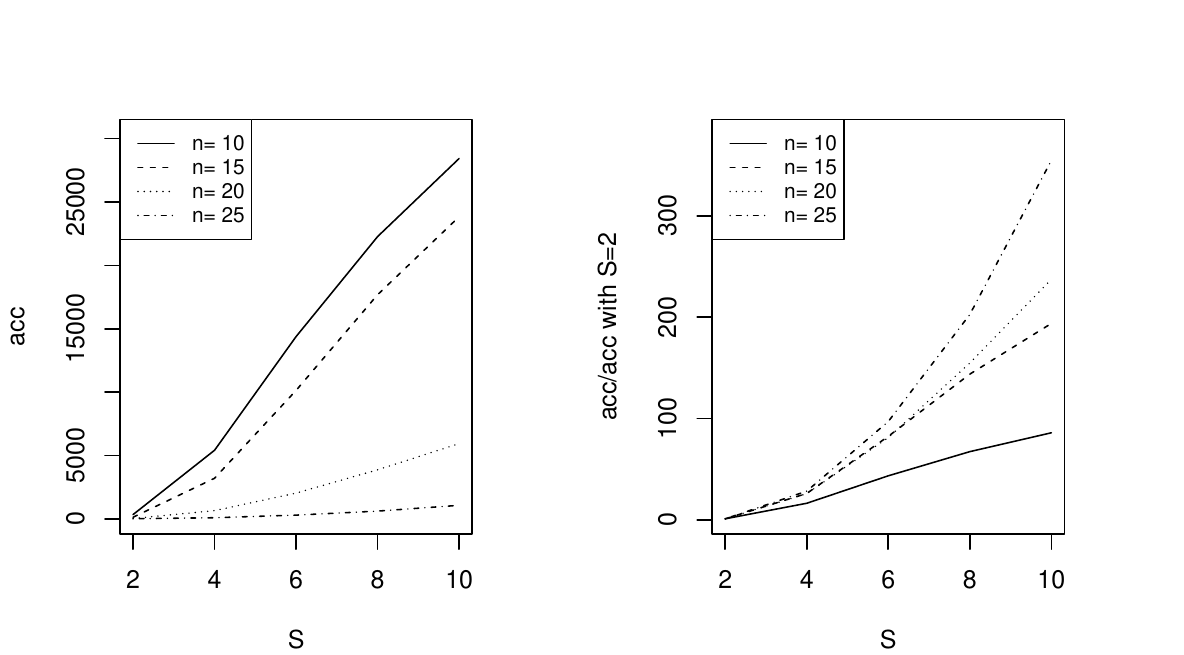}
	\caption{\textit{Left}: number of accepted parameters  from 100000 proposals from the Mixture example, for varying sample size $n$ and number of pseudo-samples $S$. \textit{Right}: ratio of accepted parameters  to accepted with $S=2$.   }
	\label{fig:accmix}
\end{figure}

\begin{figure}[h!]
	\centering
		\includegraphics[width=1\linewidth]{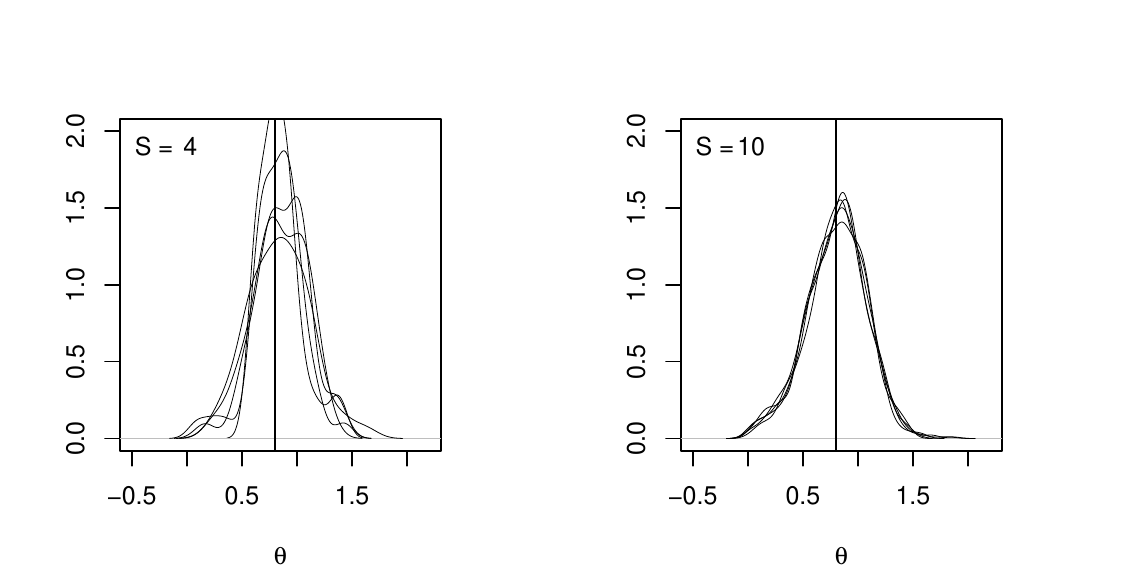}
	\caption{Five replications of the same Box-CD function, with fixed $y^{\text{obs}}$ for the position parameter in the mixture model,  with number of pseudo-samples $S$  varying. The vertical line indicates the true generating parameter $\theta_0=0.8$.}
	\label{fig:mixvar}
\end{figure}

Figure \ref{fig:mixvar} presents the Box-CD functions derived from a sample of size $n=25$ drawn using as a true generating parameter $\theta_0=0.8$. \change{  For every value of $S=4$ and $S=10$, five replications of the Box-CD are generated using the same observed sample.} When $S$ grows, the procedure's variability diminishes. Note that the tails of the function become heavier as $S$ grows. This phenomenon poses no problem as demonstrated by the simulation study (Table \ref{tab:sim}) because the confidence sets are reliant on the value of the function $CD(\theta)$ instead of tail areas. In conducting the simulation study for assessing the coverage properties of the resulting confidence intervals, we considered  $n=d=10$ and  set $S=6$.  \change{   In this example, the average lengths of the   0.95,  0.90 , 0.85 and 0.80 confidence intervals based on the LRT were 2.56, 2.88, 3.08 and  3.31, respectively, while those based on the CD-Box   were 1.22, 1.37, 1.55 and 1.84, respectively.}

\subsection{Ricker's Model}
Consider the Ricker's model \citep{ricker1954}, 
which describes the evolution of the number of animals of a certain species by
$$\log(N(t))= \log(r) + \log(N(t-1)) - N(t-1) +\sigma e(t),$$ where $N(t)$ is the unknown population at time $t$, $\log(r)$ is the logarithmic growth rate, $\sigma$ is the standard deviation of innovation and $e(t)$ is an independent Gaussian error. Given $N(t)$, the observed population at time $t$  is a Poisson random variable, 
$y_t\sim \text{Poisson}(\phi N(t)) $, where $\phi$ is a scale parameter.  
The likelihood for this model is intractable.

We conduct two experiments: first, we assume that only the log-growth rate is unknown
and consider as summary statistics
the median of counts and the quantiles of level 0.25\change{,} 0.75. For the second experiment, both the parameters $log(r)$ and $\sigma^2$ were considered unknown, and we used as the set of summary statistics the whole time series minus the first observation, thus of length $d=19$. 
The number of pseudo-samples for each proposals were $S=2$ and $S=10$ in the two experiments, respectively. The empirical coverages of the confidence sets  are conformal with the nominal (Table \ref{tab:sim}). Two examples of confidence regions obtained for two independent draws from the model with parameters $log(r)=2$ and $\sigma^2=2$ are reported in Figure \ref{fig:rick}. \textcolor{black}{The computing time for $R=10000$ replications in the forst experiment is 2.28 seconds, for the second experiment 4.52 seconds.}
\begin{figure}
	\centering
	\includegraphics[width=0.45\linewidth]{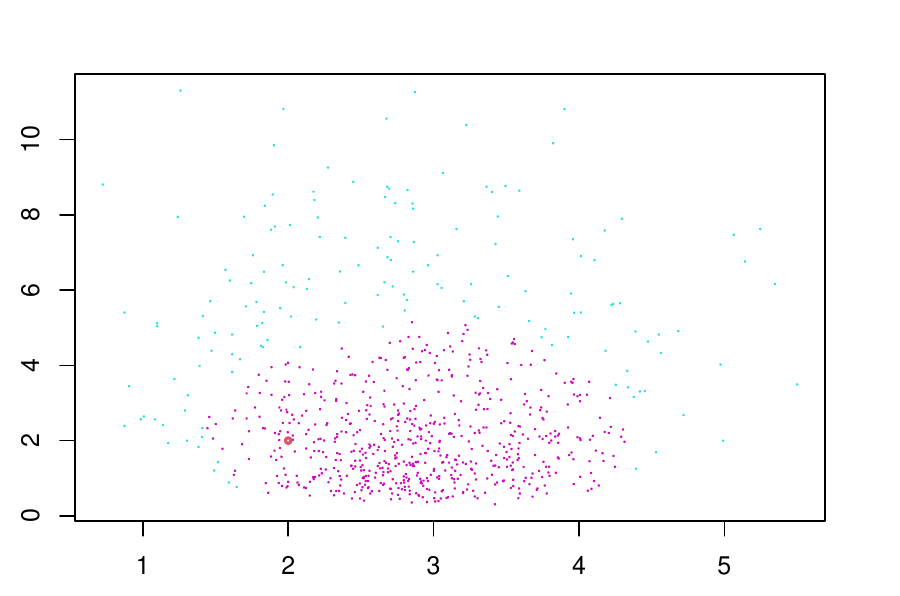}
	\includegraphics[width=0.45\linewidth]{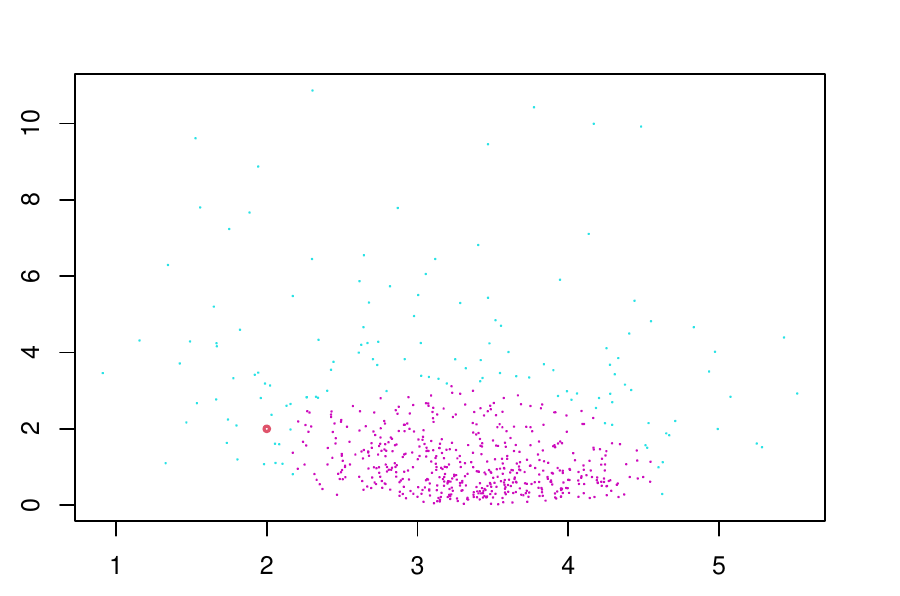}
	\caption{Two Monte-Carlo confidence regions for the parameters $\log(r)$ and $\sigma^2$ in the Ricker's model, the firts ({left}) containing the true generating parameter, the second ({right}) failing in including the paramer. }
	\label{fig:rick}
\end{figure}

\section{Discussion}
The Box Confidence Depth algorithm  introduced in this paper provides a simple yet effective method  to construct calibrated confidence intervals and regions in both likelihood-based and likelihood-free scenarios, making it versatile across various statistical contexts.
The method is designed to work with multivariate parameters and potentially multivariate test statistics; in fact, it effectively uses a measure of centrality of observed data with respect to simulated data, providing intuitive ordering in multivariate spaces.

There are several areas for potential improvement. As with many Monte Carlo  methods, the procedure may be demanding in terms of computational resources,  especially for high-dimensional problems. To boost the computational efficiency of the method, techniques such as adaptive proposals, resampling strategies, and  methods for simulating rare events may be utilized 
\citep{ tokdar2010importance,
	caron2014some, bugallo2017adaptive}.
Automated methods for selecting optimal summary statistics, even multivariate, in the absence of domain knowledge could enhance the method's applicability. In particular, Machine Learning methods, and contrastive learning  approaches can be used to learn summary statistics \citep[see][]{fearnhead:prangle:2012, cranmer2015approximating, jiang2017learning, wang2022approximate}. Similarly, advanced methods for the essential density estimation step, such as Normalizing Flows \citep{kobyzev2020normalizing} could be adapted. A detailed exploration of these methods in this context   presents an interesting direction for future research.

\begin{funding}
The first author acknowledges funding from the European Union under the ERC grant project number 864863 and from the Severo Ochoa Programme for Centres of Excellence in R\&D (Barcelona School of Economics CEX2024-001476-S), funded by MCIN/AEI/10.13039/501100011033.
\end{funding}

\bibliographystyle{imsart-nameyear.bst} 

\bibliography{paper-ref}

\end{document}